\documentclass[12pt,a4paper]{article}

\usepackage[utf8]{inputenc}
\usepackage[T1]{fontenc}
\usepackage{lmodern}
\usepackage{amsmath,amssymb,amsthm}
\usepackage{graphicx}
\usepackage{caption}
\usepackage{subcaption}
\usepackage{booktabs}
\usepackage{tabularx}

\usepackage{hyperref}
\hypersetup{
	colorlinks=true,
	linkcolor=blue,
	citecolor=blue,
	urlcolor=blue,
}

\usepackage{geometry}
\geometry{
	margin=1in,
}

\usepackage{fancyhdr}
\fancyhf{}

\newtheorem{theorem}{Theorem}
\newtheorem{lemma}{Lemma}

\newtheorem{definition}{Definition}

\newtheorem{example}{Example}
\newtheorem{proposition}{Proposition}

\title{%
	\vspace{2cm}
	Packaged Quantum States in Quantum Field Theory:\\
	No Partial Factorization and Gauge-Invariant Multi-Particle Entanglement
}

\author{
	Rongchao Ma \\
	\textit{Department of Physics, University of Alberta, Edmonton, Canada}\\
}

\date{\today}

\begin{document}

	\maketitle

	\begin{abstract}
		We demonstrate that quantum field excitations can generate packaged entangled states, in which all internal quantum numbers (IQNs) (e.g., electric charge, flavor, and color) are inseparably entangled and constrained to irreducible representation (irrep) blocks.
		This is a consequence of local gauge invariance and superselection rules.
		The confinement restricts the net gauge charge to a single superselection sector, thereby excluding cross‐sector superpositions but allowing entanglement within one sector.
		We establish theorems that:
		\textbf{(1)} Explain how these packaged entangled states naturally arise from quantum field excitations,
		\textbf{(2)} Show how they remain gauge invariant or transform covariantly within a fixed net‐charge sector, and
		\textbf{(3)} Illustrate how external degrees of freedom (DOFs) (e.g., spin or momentum) can combine with packaged internal charges to yield gauge-invariant entanglement.
		Finally, we show that spin or momentum measurements on these hybrid states induce a collapse of the internal entanglement.
	\end{abstract}

	\tableofcontents

	\section{Introduction}
	
	\emph{Packaged entangled states} \cite{Ma2017} refer to multi-particle configurations in which all relevant \textbf{internal quantum numbers (IQNs)} are inseparably entangled under the charge-conjugation operator $\hat{C}$. 
	This unique entanglement structure influences how particles are created and annihilated, with potential applications in quantum information.  
	Here, we show that these packaged entangled states can be rigorously derived within quantum field theory, rather than introduced ad hoc.
	
	The striking feature of packaged entanglement is its emergence from two fundamental principles:
	\textbf{(1)} local gauge invariance \cite{Feynman1949,Yang1954,Utiyama1956,Weinberg1967} and 
	\textbf{(2)} superselection rules \cite{WWW1952,DHR1,DHR2,StreaterWightman}. 
	Together, these principles enforce that all IQNs remain locked in \textbf{irreducible representation (irrep)} blocks \cite{Weyl1925,Wigner1939} and prevent partial factorization of internal charges. 
	Moreover, packaged entanglement is not restricted to charge-conjugation alone or to particle-antiparticle pairs.
	It also arises among identical particles, distinct particles, and multi-particle systems of any size.
	
	In this paper, we demonstrate:
	\textbf{(1)} \textit{Single-particle level.} 
	Local gauge invariance compels each creation operator to form an irrep of $\mathrm{gauge}\times\mathrm{Lorentz}$.  
	Consequently, the IQNs of a single-particle remain inseparable and prohibit partial factorization.
	We formalize this in Theorem~\ref{thm:NoPartialFactorization}.
	\textbf{(2)} \textit{Multi-particle level.}
	In multi-particle systems, each creation operator still constitutes its own $\mathrm{gauge} \times \mathrm{Lorentz}$ irrep and superselection rules fix the total charge (or color) within a single superselection sector.
	This constraint enables nontrivial packaged entangled states \cite{Ma2017} yet disallows cross‐sector interference.
	We prove in Theorem~\ref{thm:SuperselectionPackaging} that such multi-particle packaged entangled states remain gauge invariant (or, more precisely, transform covariantly) and confined to one superselection sector.
	\textbf{(3)} \textit{Hybrid states.}
	We further consider states that combine IQNs and external \textbf{degrees of freedom (DOFs)} (spin or momentum). 
	Theorem~\ref{thm:HybridGauge} shows how measuring an external DOF can collapse the internal entanglement, illustrating the interplay between gauge constraints and quantum measurements.

	\section{Single-Particle Packaging}
	
	Consider a field $\psi(x)$ in a local gauge theory \cite{PeskinSchroeder,WeinbergBook} with gauge group $G$, typically $U(1)$ or $SU(N)$.  
	Under canonical quantization, $\psi(x)$ expands in creation/annihilation operators:
	\begin{equation}\label{eq:fieldExpansion}
		\psi(x)
		\;=\;
		\int\!\!\frac{d^3p}{(2\pi)^3}\sum_s 
		\Bigl[
		\hat{a}_s(\mathbf{p})\,u_s(\mathbf{p})\,e^{-ip\cdot x}
		\;+\;
		\hat{b}_s^\dagger(\mathbf{p})\,v_s(\mathbf{p})\,e^{+ip\cdot x}
		\Bigr],
	\end{equation}	
	where $\hat{a}_s^\dagger(\mathbf{p})$ and $\hat{b}_s^\dagger(\mathbf{p})$ create particle/antiparticle excitations with spin projection $s$.  
	Here $s$ typically indexes an external DOF (spin or helicity), while the operators carry internal gauge charges (e.g., electron vs.\ positron, quark vs.\ antiquark, etc.).  
	By design, $\psi$ belongs to an irrep of $G \times \text{Lorentz}$.

	\paragraph{(1) Definition of Single-Particle Packaged States.}
	We now formally give the definition of a single-particle packaged state:
	\begin{definition}[Single-Particle Packaged State]
		\label{def:singlePackaging}
		Consider a single-particle state
		\begin{equation}\label{SingleParticlePackagedState}
			|P\rangle \;\equiv\; \hat{a}^\dagger(\mathbf{p})\,|0\rangle \text{ or }
			\lvert \bar{P}\rangle\equiv \hat{b}^\dagger(\mathbf{p})\,\lvert 0\rangle.
		\end{equation}
		If the operator $\hat{a}^\dagger(\mathbf{p})$ transforms as an irrep under local gauge transformations and thereby carries all relevant IQNs (electric charge, flavor, color, baryon/lepton number, etc.) as one inseparable block,
		then we say that state $|P\rangle$ is a \textbf{single-particle packaged state}.
	\end{definition}
	
	\begin{example}[Examples of Single-Particle Packaged States]
		Typical examples include:
		\begin{enumerate}
			\item The electron field $\psi_e(x)$ in QED carrying $U(1)$ charge $-e$ and spin-$\tfrac12$.  
			The creation operator $\hat{a}_{e^-}^\dagger(\mathbf{p})$ forms a single irrep: spin-$\tfrac12$ \emph{plus} electric charge $-e$.
			
			\item A quark field in QCD carrying color $\mathbf{3}$ or $\overline{\mathbf{3}}$.
			
			\item A lepton field with fixed lepton number.
			
			In all cases, the field operator transforms as an irreducible representation of $\mathrm{gauge}\times\mathrm{Lorentz}$, so one cannot independently manipulate the charge or color component of $\hat{a}^\dagger$.
		\end{enumerate}
	\end{example}

	\paragraph{(2) Packaging of Single-Particle IQNs.}
	Why no partial factorization of IQNs?
	The reason is that local gauge invariance forbids splitting these quantum numbers.  
	We restate this standard fact in an information-theoretic language as follows:
	
	\begin{theorem}[Packaging of Single-Particle IQNs]
		\label{thm:NoPartialFactorization}
		Let $\psi(x)$ be an irreducible field under $G \times \mathrm{Lorentz}$, where $G$ is a local gauge group (Abelian or non-Abelian).
		Then each creation operator $\hat{a}^\dagger(\mathbf{p})$ is a single, complete irrep block that packages \emph{all} IQNs (electric charge, flavor, color, etc.), disallowing partial factorization.
	\end{theorem}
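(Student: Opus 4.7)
The plan is to translate the hypothesis that $\psi(x)$ is irreducible under $G\times\mathrm{Lorentz}$ into a statement about the single-particle Hilbert space and then invoke Schur's lemma to rule out any physically realizable splitting of $\hat{a}^\dagger(\mathbf{p})$ into independent IQN pieces. First I would make the gauge action explicit: if $U(g)$ implements $g\in G$ on Fock space, then by hypothesis one has $U(g)\,\hat{a}_i^\dagger(\mathbf{p})\,U(g)^{-1}=\sum_j D_{ji}(g)\,\hat{a}_j^\dagger(\mathbf{p})$, where the collective internal index $i$ labels all the IQNs (together with any spin/Lorentz label) and $D$ is the irrep carried by $\psi$. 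Since the vacuum $|0\rangle$ is $G$-invariant, the single-particle states $|P_i\rangle:=\hat{a}_i^\dagger(\mathbf{p})|0\rangle$ span an irreducible $G$-module $V_D$, and the whole question becomes one about operators on $V_D$.

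Next I would formulate ``partial factorization'' precisely as the negation of the desired conclusion: it would mean the existence of nontrivial operators $\hat{b}^\dagger$ and $\hat{c}^\dagger$ carrying disjoint proper subsets of the IQNs such that $\hat{a}_i^\dagger(\mathbf{p})=\hat{b}_\alpha^\dagger(\mathbf{p})\,\hat{c}_\beta^\dagger(\mathbf{p})$ under some index identification $i\leftrightarrow(\alpha,\beta)$, together with the requirement that the intermediate single-particle states $\hat{b}_\alpha^\dagger|0\rangle$ be independently accessible as physical states. Equivalently, one would need a gauge-invariant projector $\Pi$ onto a ``charge-only'' sub-factor of $V_D$ that reads off one IQN while leaving the others untouched. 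The core step is then to exclude any such $\Pi$ via Schur's lemma: any operator on $V_D$ commuting with the irreducible $G$-action is a scalar multiple of the identity, so no projector onto a proper sub-IQN subspace can be simultaneously nontrivial and gauge-invariant. Hence no gauge-consistent process can peel off a single IQN from $\hat{a}^\dagger(\mathbf{p})$, which is the claim.

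The main obstacle I anticipate is cleanly distinguishing this physical statement from the innocuous mathematical fact that irreps of a product group $G=G_1\times G_2$ do factor as tensor products $V_1\otimes V_2$ of irreps of the factors. Although such a tensor-product decomposition always exists at the level of the abstract representation space, the would-be projectors onto the individual factors either fail to commute with the full $G$-action (if meant to pick out a specific vector in one factor) or reduce to scalars (if meant to preserve the $G$-action), so in neither case do they correspond to a gauge-admissible measurement that isolates a proper sub-IQN. Articulating this distinction — representation-theoretic factorizability versus operational accessibility as a gauge- and superselection-compatible subsystem — is where the argument must be made most carefully, and where the information-theoretic framing introduced in the paper carries the real content of the theorem.
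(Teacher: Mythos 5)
Your proposal is correct and follows essentially the same route as the paper: both arguments reduce the claim to the irreducibility of the representation carried by $\hat{a}^\dagger(\mathbf{p})$ and then invoke Schur's lemma to rule out any nontrivial gauge-invariant sub-structure. You are in fact more careful than the paper's own one-paragraph proof, particularly in flagging the distinction between the harmless tensor-product factorization of $G\times\mathrm{Lorentz}$ irreps and the operationally meaningful impossibility of isolating a proper subset of IQNs, a subtlety the paper does not address.
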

	
	\begin{proof}
		Since the field operators belong to an irreducible representation of $G \times \mathrm{Lorentz}$ (in the sense of the Wightman axioms or an equivalent approach), Schur's lemma applies: one cannot decompose an irreducible representation into smaller invariant sub-representations.  
		Hence, any attempt to split the charge (or color, flavor) among multiple substates within a single-particle would violate the representation structure.  
		Therefore, irreducibility directly implies the ``packaging'' of these IQNs.
	\end{proof}

	\paragraph{(3) Superposition of Single-Particle Packaged States.}
	Single‐particle irreps cannot exhibit entanglement because, by definition, entanglement requires correlations between distinct subsystems (i.e., multiple particles).
	Moreover, if a particle has a nonzero gauge charge, superselection forbids forming a coherent superposition of particle and antiparticle states, for example, \(\alpha\,|P\rangle + \beta\,|\bar{P}\rangle\) would be disallowed if \(P\) and \(\bar{P}\) carry opposite (nonzero) charges.
	However, single‐particle superpositions of the form \(\alpha\,|P\rangle + \beta\,|\bar{P}\rangle\) are allowed in the following circumstances:
	\begin{enumerate}
		\item No net gauge charge. If both \(P\) and \(\bar{P}\) carry zero gauge charge (e.g., a genuinely neutral particle‐antiparticle pair), then they lie in the same superselection sector. No superselection barrier separates \(|P\rangle\) and \(|\bar{P}\rangle\).
		
		\item Difference is global, not gauged. If \(P\) and \(\bar{P}\) differ only by a global quantum number (such as flavor), then that difference is not protected by a local gauge symmetry.
	\end{enumerate}
	
	\begin{example}[Superposition of Neutral Mesons]
		Neutral mesons can indeed form superpositions like 
		\[
		\alpha\,\lvert K^0\rangle \;+\; \beta\,\lvert \overline{K}^0\rangle,
		\]
		where \(\lvert K^0\rangle\) and \(\lvert \overline{K}^0\rangle\) are flavor eigenstates but both carry zero net electric charge.
	\end{example}

	Thus, at the single‐particle level, local gauge symmetry enforces that all internal charges remain ``locked together'' within one creation operator, forming an irreducible representation.
	We refer to this as a ``packaged'' irrep.
	In the next section, we see how this principle naturally extends to multi‐particle superpositions.

	\section{Multi-Particle Packaging}
	
	We now consider the packaging of multi-particle states.
	
	\paragraph{(1) Multi-Particle Product States.}
	The full Fock space $\mathcal{H}_\text{Fock}$ is spanned by applying creation operators to the vacuum:
	\begin{equation}\label{FockSpace}
		\mathcal{H}_\text{Fock}
		\;=\;
		\Bigl\{
		\,\hat{a}^\dagger(\mathbf{p}_1)\cdots \hat{a}^\dagger(\mathbf{p}_n)\,\lvert 0\rangle,\;
		\hat{b}^\dagger(\mathbf{q}_1)\cdots \hat{b}^\dagger(\mathbf{q}_m)\,\lvert 0\rangle,\;\dots
		\Bigr\},
	\end{equation}
	where each $\hat{a}^\dagger(\mathbf{p}_i)$ create a particle with momentum $\mathbf{p}_i$ and $\hat{b}^\dagger(\mathbf{q}_j)$ create an antiparticle with momentum $\mathbf{q}_j$.
	Since both $\hat{a}^\dagger(\mathbf{p}_i)$ and $\hat{b}^\dagger(\mathbf{q}_j)$ serve as creation operators, for simplicity we will denote by $\hat{a}^\dagger(\mathbf{p}_i)$ the creation of either a particle or an antiparticle.	
	
	For later use, we first introduce the following lemma:
	
	\begin{lemma}[Gauge Covariance of Multi-Particle Product States]
		\label{lemma:GaugeInvMultpParticle}
		Let $G$ be a local gauge group (Abelian or non-Abelian).  
		Consider a multi-particle product state
		\[
		|\Theta\rangle
		\;=\;
		\hat{a}_1^\dagger(\mathbf{p}_1,q_1)\,
		\hat{a}_2^\dagger(\mathbf{p}_2,q_2)\,\cdots\,
		\hat{a}_n^\dagger(\mathbf{p}_n,q_n)\,\lvert 0\rangle,
		\]
		where each $\hat{a}_k^\dagger(\mathbf{p}_k,q_k)$ transforms in an irreducible representation of $G$ (i.e., it packages all internal quantum numbers of that particle).  
		Then $|\Theta\rangle$ transforms in a well-defined (generally reducible) representation of $G$, namely the tensor product of the individual single-particle irreps.  
		In particular, if $G=U(1)$, then $|\Theta\rangle$ carries total charge $Q=\sum_k q_k$.
	\end{lemma}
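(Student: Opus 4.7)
The plan is to exploit the gauge invariance of the vacuum together with the prescribed transformation law of each single-particle creation operator, and then propagate this structure through the product. First I would fix notation: for each $U\in G$, the assumption that $\hat{a}_k^\dagger$ transforms in an irrep $D^{(k)}$ of $G$ means $U\,\hat{a}_k^\dagger(\mathbf{p}_k,q_k)\,U^{-1} = \sum_{\beta} [D^{(k)}(U)]_{\beta\alpha_k}\,\hat{a}_{k,\beta}^\dagger(\mathbf{p}_k)$, where $\alpha_k$ labels the internal index carried by $\hat{a}_k^\dagger$. This is precisely Theorem~\ref{thm:NoPartialFactorization} applied particle by particle; no partial factorization of the internal indices is available because each $D^{(k)}$ is irreducible.

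Second, I would compute $U|\Theta\rangle$ by inserting $U^{-1}U=\mathbb{I}$ between every adjacent pair of creation operators, and then invoking the standard axiom $U|0\rangle=|0\rangle$ (gauge invariance of the Fock vacuum). This yields
\[
U|\Theta\rangle \;=\; \Bigl(\prod_{k=1}^n U\,\hat{a}_k^\dagger\,U^{-1}\Bigr)|0\rangle \;=\; \sum_{\beta_1,\dots,\beta_n}\Bigl(\prod_{k=1}^n [D^{(k)}(U)]_{\beta_k\alpha_k}\Bigr)\,\hat{a}_{1,\beta_1}^\dagger\cdots\hat{a}_{n,\beta_n}^\dagger\,|0\rangle,
\]
which is exactly the tensor-product representation $D^{(1)}\otimes\cdots\otimes D^{(n)}$ acting on the corresponding Fock basis vectors. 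Since a tensor product of irreps is in general reducible, this establishes the first assertion.

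Third, I would specialize to the Abelian case $G=U(1)$: each $D^{(k)}$ is one-dimensional with $D^{(k)}(\theta)=e^{iq_k\theta}$, so the phases multiply and $U(\theta)|\Theta\rangle = \exp\!\bigl(i\theta\sum_k q_k\bigr)|\Theta\rangle$, identifying the total charge as $Q=\sum_k q_k$.

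I do not expect a serious obstacle. The two points requiring mild care are (i) invoking $U|0\rangle=|0\rangle$, which I would simply cite as a standard Wightman-type axiom, and (ii) ensuring that the fermionic/bosonic (anti)commutation relations do not spoil factorization — but since each $U\hat{a}_k^\dagger U^{-1}$ is again a linear combination of creation operators at the same momentum $\mathbf{p}_k$ and same particle label $k$, the ordering of the $n$ factors is preserved and the tensor product structure emerges without any reshuffling. The whole argument is thus essentially a clean application of the representation property, with Theorem~\ref{thm:NoPartialFactorization} supplying the single-particle building block.
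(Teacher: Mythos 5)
Your proposal is correct and follows essentially the same route as the paper's proof: apply the single-particle irrep transformation law from Theorem~\ref{thm:NoPartialFactorization} to each creation operator, use the gauge invariance of the vacuum to conclude that the product state transforms in the tensor product $D^{(1)}\otimes\cdots\otimes D^{(n)}$, and specialize to $U(1)$ where the one-dimensional phases multiply to give $Q=\sum_k q_k$. Your version is somewhat more explicit (inserting $U^{-1}U$ between factors and noting that operator ordering is undisturbed), but the logical content is identical.
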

	
	\begin{proof}
		We use standard arguments from quantum field theory:
		\begin{enumerate}
			\item \emph{Single-particle irreps.}  
			By Theorem~\ref{thm:NoPartialFactorization}, each creation operator 
			$\hat{a}_k^\dagger(\mathbf{p}_k,q_k)$ transforms under an irreducible representation of $G$.  
			Concretely, for each gauge transformation $g\in G$,
			\[
			U_g\,\hat{a}_k^\dagger(\mathbf{p}_k,q_k)\,U_g^{-1}
			\;=\;
			\sum_{q'} \bigl[D^{(k)}_g\bigr]_{q'\,q_k}\,
			\hat{a}_k^\dagger(\mathbf{p}_k,q'),
			\]
			where $[D^{(k)}_g]$ is the matrix representation of $g$ in the $k$th single-particle irrep.
			
			\item \emph{Product operator.}  
			Since each $\hat{a}_k^\dagger$ transforms by $D^{(k)}$, the product operator
			\[
			\hat{a}_1^\dagger(\mathbf{p}_1,q_1)\;
			\hat{a}_2^\dagger(\mathbf{p}_2,q_2)\;\cdots\;
			\hat{a}_n^\dagger(\mathbf{p}_n,q_n)
			\]
			transforms according to the tensor product
			\[
			D^{(1)} \;\otimes\; D^{(2)} \;\otimes\;\cdots\;\otimes\; D^{(n)}.
			\]
			Acting on the vacuum $|0\rangle$ (which is gauge invariant), we obtain
			\[
			|\Theta\rangle
			\;=\;
			\hat{a}_1^\dagger\,\cdots\,\hat{a}_n^\dagger \,|0\rangle,
			\]
			which thus lies in a definite representation of $G$ (generally reducible if $G$ is non-Abelian).
			
			\item \emph{Net charge or color.}  
			In an Abelian gauge theory $G=U(1)$, the tensor product is labeled by the sum of charges, $Q=\sum_k q_k$.  
			Hence $|\Theta\rangle$ has total charge $Q$.  
			In a non-Abelian theory (e.g., $SU(3)$ for color), the tensor product of irreps may decompose into a direct sum:
			\[
			D^{(1)} \otimes \cdots \otimes D^{(n)}
			\;=\;
			\bigoplus_\lambda\, N_{\lambda}\, D_{\lambda}.
			\]
			Either way, $|\Theta\rangle$ transforms within a well-defined gauge representation.
		\end{enumerate}
	\end{proof}

	\paragraph{(2) Superselection and Superpositions.}
	Superselection theorems (Wick-Wightman-Wigner \cite{WWW1952}, Doplicher-Haag-Roberts \cite{DHR1,DHR2}, etc.) imply that net-charge sectors $\mathcal{H}_Q$ for $Q\neq Q'$ are disjoint Hilbert subspaces that cannot form coherent superpositions.  
	However, within a single sector $\mathcal{H}_Q$, linear combinations of multi-particle states remain valid and can produce nontrivial packaged entanglement.
	
	Hence, we can form superpositions of these basis states if they all lie within the same net-charge sector, which ensure that superselection is respected.  
	The total state remains in a single net-charge sector $Q$ even if individual particles carry different single-particle charges.  
	If the resulting superposition is non-factorizable across the constituent excitations with respect to internal charges, we obtain a multi-particle packaged entangled state.

	\paragraph{(3) Definition of Multi-Particle Packaged Entangled States.}
	Now we are ready to give a formal definition for multi-particle packaged entangled states:
	
	\begin{definition}[Multi-Particle Packaged Entangled State]
		\label{def:MultiPackaged}
		Consider a multi-particle state
		\begin{equation}\label{SuperpositionState}
			|\Psi\rangle \;=\; \sum_n \alpha_n \,|\Theta_n\rangle,
		\end{equation}
		where each
		$
		|\Theta_n\rangle
		\;=\;
		\hat{a}_{n,1}^\dagger\,\hat{a}_{n,2}^\dagger \cdots \,\lvert 0\rangle
		$
		is a multi-particle basis state.
		If the state $|\Psi\rangle$ satisfies the following conditions:
		\begin{enumerate}
			\item Each $\hat{a}_{n,k}^\dagger$ is a single-particle packaged operator (Definition~\ref{def:singlePackaging});
			
			\item Every basis state $|\Theta_n\rangle$ belongs to the same net-charge subspace $\mathcal{H}_Q$, so superselection is respected;
			
			\item The total wavefunction is non-factorizable across its multiple excitations with respect to internal charges (i.e., internal entangled);
		\end{enumerate}
		then we say that $|\Psi\rangle$ is a \textbf{packaged entangled state}.  
		Otherwise, $|\Psi\rangle$ is a \textbf{packaged non-entangled state}.
	\end{definition}
	
	\begin{example}[Electron-Positron Pair vs. Forbidden $\pm 2e$ Mixing]
		Let $\hat{a}_{e^-}^\dagger$ create an electron ($Q=-e$) and $\hat{b}_{e^+}^\dagger$ create a positron ($Q=+e$).  
		Then
		\[
		\alpha \,\hat{a}_{e^-}^\dagger(\mathbf{p}_1)\,\hat{b}_{e^+}^\dagger(\mathbf{p}_2)\,|0\rangle
		\;+\;
		\beta \,\hat{b}_{e^+}^\dagger(\mathbf{p}_1)\,\hat{a}_{e^-}^\dagger(\mathbf{p}_2)\,|0\rangle
		\]
		lies in $\mathcal{H}_{Q=0}$ and forms a packaged entangled state for $\alpha,\beta\neq 0$.
		By contrast, mixing $\hat{a}_{e^-}^\dagger\hat{a}_{e^-}^\dagger$ ($Q=-2e$) with $\hat{b}_{e^+}^\dagger\hat{b}_{e^+}^\dagger$ ($Q=+2e$) violates superselection because it would superpose net charges $\pm 2e$.
	\end{example}
	
	\paragraph{(4) Packaging of Multi-Particle IQNs.}
	We now establish the main properties of multi-particle packaged states:
	
	\begin{theorem}[Packaging of Multi-Particle IQNs]
		\label{thm:SuperselectionPackaging}
		Let
		$
		|\Theta_n\rangle
		\;=\;
		\hat{a}_{n,1}^\dagger\,\hat{a}_{n,2}^\dagger \cdots \,\lvert 0\rangle
		$
		be multi-particle basis states belonging to the same net-charge subspace $\mathcal{H}_Q$.  
		Consider any superposition
		\[
		|\Psi\rangle 
		\;=\; 
		\sum_n \alpha_n \,|\Theta_n\rangle,
		\quad \forall ~ |\Theta_n\rangle\in\mathcal{H}_Q.
		\]
		Then:
		\begin{enumerate}
			\item \textbf{Superselection Compatibility:}  
			$|\Psi\rangle$ is physically realizable with no cross-sector interference.
			Mixing states from $\mathcal{H}_Q$ and $\mathcal{H}_{Q'}$ (for $Q\neq Q'$) is forbidden.
			
			\item \textbf{Packaged Entangled State:}  
			If $|\Psi\rangle$ is non-factorizable over its particles, it is called a \emph{packaged entangled state}.
			
			\item \textbf{Gauge Covariance:}  
			Since each $|\Theta_n\rangle$ transforms covariantly with total charge $Q$ (Lemma~\ref{lemma:GaugeInvMultpParticle}), any linear combination $|\Psi\rangle$ within the same charge sector also transforms in a well-defined gauge representation.  
			No partial IQNs can be extracted from a single term.
		\end{enumerate}
	\end{theorem}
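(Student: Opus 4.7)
The plan is to prove the three claims sequentially, since each follows from a distinct ingredient already established in the paper: Part 1 from the Wick--Wightman--Wigner and DHR superselection principles, Part 2 from Definition~\ref{def:MultiPackaged}, and Part 3 from Lemma~\ref{lemma:GaugeInvMultpParticle}. I expect the proof to be fairly short, since the substantive structural work has been done upstream.

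For Part 1 (Superselection Compatibility), I would invoke the fact that the total charge operator $\hat{Q}$ lies in the center of the algebra of physical observables, so every physical density matrix is block-diagonal with respect to the decomposition $\mathcal{H} = \bigoplus_Q \mathcal{H}_Q$. Since each $|\Theta_n\rangle \in \mathcal{H}_Q$ by hypothesis, linearity immediately gives $|\Psi\rangle \in \mathcal{H}_Q$, so no cross-sector coherence is produced. Conversely, a putative superposition straddling $\mathcal{H}_Q$ and $\mathcal{H}_{Q'}$ with $Q \neq Q'$ would produce off-diagonal matrix elements that commute with every gauge-invariant observable, and are therefore physically unobservable. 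I would phrase this as an appeal to the cited superselection theorems rather than re-derive them.

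Part 2 is essentially a direct invocation of Definition~\ref{def:MultiPackaged}: I verify each clause in turn. Each single-particle operator is packaged by Theorem~\ref{thm:NoPartialFactorization}, the common net-charge sector is supplied by the hypothesis, and non-factorizability is the antecedent of the clause. For Part 3 (Gauge Covariance), Lemma~\ref{lemma:GaugeInvMultpParticle} ensures that each $|\Theta_n\rangle$ transforms within the same representation block of $G$ inside $\mathcal{H}_Q$. In the Abelian case, every basis state picks up the same phase $e^{iQ\alpha_g}$, so $|\Psi\rangle$ inherits this global phase by linearity. In the non-Abelian case, $U_g$ preserves $\mathcal{H}_Q$ as an invariant subspace, so $|\Psi\rangle$ transforms within a well-defined (possibly reducible) representation. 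The ``no partial IQN extraction'' corollary follows by applying Schur's lemma to each packaged single-particle operator: any attempt to isolate a sub-IQN would contradict the irreducibility established in Theorem~\ref{thm:NoPartialFactorization}.

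The main obstacle, and the aspect I would treat most carefully, is the precise operational meaning of ``forbidden'' in Part 1. The statement is not that the Hilbert-space vector $\alpha|\chi_Q\rangle + \beta|\chi_{Q'}\rangle$ cannot be written down formally, but that it is operationally indistinguishable from the incoherent mixture $|\alpha|^2 |\chi_Q\rangle\langle\chi_Q| + |\beta|^2 |\chi_{Q'}\rangle\langle\chi_{Q'}|$, since no gauge-invariant observable has matrix elements between $\mathcal{H}_Q$ and $\mathcal{H}_{Q'}$. Once this distinction is made explicit and attributed to the WWW/DHR framework, the remaining arguments reduce to bookkeeping on representation labels.
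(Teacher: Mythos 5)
Your proposal is correct and follows essentially the same route as the paper: Part 1 by appeal to the Wick--Wightman--Wigner/DHR superselection theorems, Part 2 by direct invocation of Definition~\ref{def:MultiPackaged}, and Part 3 via Lemma~\ref{lemma:GaugeInvMultpParticle}; your added remarks on the central charge operator and the operational indistinguishability of cross-sector superpositions from mixtures are a welcome refinement rather than a departure. One small wording slip: gauge-invariant observables have \emph{vanishing} matrix elements between $\mathcal{H}_Q$ and $\mathcal{H}_{Q'}$ (they do not ``commute with'' the off-diagonal elements), which is what renders the relative phase unobservable.
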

	
	\begin{proof}
		We proceed in three steps:
		\begin{enumerate}
			\item \emph{Superselection compatibility.}  
			By the Wick-Wightman-Wigner \cite{WWW1952} and Doplicher-Haag-Roberts \cite{DHR1,DHR2} theorems, states carrying different net charges $Q \neq Q'$ are orthogonal under local gauge-invariant measurements.  
			Consequently, if each $|\Theta_n\rangle\in \mathcal{H}_Q$, their linear superposition $\sum_n \alpha_n\,|\Theta_n\rangle$ remains strictly in $\mathcal{H}_Q$ and cannot mix with $\mathcal{H}_{Q'}$.
			
			\item \emph{Packaged entanglement.}  
			A superposition $|\Psi\rangle$ is \emph{entangled} if it cannot be written as a tensor product across distinct excitations, while still lying in one net-charge sector $Q$.  
			In that case, we call $|\Psi\rangle$ a \emph{packaged entangled state} (Definition~\ref{def:MultiPackaged}).
			
			\item \emph{Gauge covariance.}  
			Each basis state $|\Theta_n\rangle = \hat{a}_{n,1}^\dagger \hat{a}_{n,2}^\dagger \cdots\,|0\rangle$ transforms as a definite (possibly reducible) representation with total charge $Q$ (Lemma~\ref{lemma:GaugeInvMultpParticle}).  
			Since $|\Psi\rangle$ is a linear combination of such vectors in the same charge sector, it remains in a well-defined gauge representation with net $Q$.
		\end{enumerate}
		
		Thus, $|\Psi\rangle$ is both superselection-compatible and gauge-covariant.  
		If it is non-separable over its particles, it qualifies as a \emph{packaged entangled state}.
	\end{proof}

	\paragraph{(5) Existence of a Packaged Entangled Basis.}
	With superselection rules, we can decompose the total Hilbert space factors as
	\[
	\mathcal{H}
	\;=\;
	\bigoplus_{Q \,\in\, \sigma(\hat{Q})}\,
	\mathcal{H}_Q,
	\]
	where $\hat{Q}$ is the self-adjoint charge operator and each $\mathcal{H}_Q$ is the subspace of states with net charge $Q$.
	In fact, $\mathcal{H}_Q$ is exactly the set of all multi-particle packaged states with total charge $Q$.
	The states in \(\mathcal{H}_Q\) can form packaged entangled states, denote by $E_Q$.  
	Since the sum of two packaged entangled states can yield a product (separable) state, $E_Q$ is not closed under addition and hence does not form a linear subspace.	
	Then the question is: Can we pick out a complete orthonormal basis from $E_Q$ that spans \(\mathcal{H}_Q\) ?
	The answer is yes.
	
	\begin{proposition}[Maximal Orthonormal Packaged Entangled Basis]
		\label{ExistenceOfMaximalOrthonormalPEB_v2}
		Let \(\mathcal{H}_Q\) be the subspace of total charge \(Q\) in a gauge theory.  Then there exists a complete orthonormal basis of \(\mathcal{H}_Q\) whose elements are all packaged entangled states.
	\end{proposition}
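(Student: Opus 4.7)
The plan is to construct the desired basis in two stages: first produce a complete orthonormal basis of packaged \emph{product} (non-entangled) states in $\mathcal{H}_Q$, and then apply a $2\times 2$ unitary reshuffling that converts pairs of product basis vectors into entangled superpositions, in direct analogy with how the Bell basis is obtained from the computational basis of two qubits.

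First I would exhibit a complete orthonormal basis $\{|\Theta_n\rangle\}_{n\in I}$ of $\mathcal{H}_Q$ consisting of multi-particle packaged product states. Such a basis is supplied by the standard Fock-space construction restricted to the charge-$Q$ sector: take all ordered products of packaged creation operators (over a countable mesh of momentum and spin labels) whose individual charges sum to $Q$, and invoke Lemma~\ref{lemma:GaugeInvMultpParticle} to confirm that each resulting vector lies in $\mathcal{H}_Q$. By construction, each $|\Theta_n\rangle$ is packaged and non-entangled, and together they span $\mathcal{H}_Q$.

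Next I would pair up indices $n\leftrightarrow n'$ so that the two product states in each pair carry the same net charge $Q$ but differ in \emph{how} the same multiset of IQNs is distributed across the particle slots. A prototypical pair in the $Q=0$ sector of QED is
\[
|\Theta_n\rangle = \hat{a}_{e^-}^\dagger(\mathbf{p}_1)\,\hat{b}_{e^+}^\dagger(\mathbf{p}_2)\,\lvert 0\rangle,
\qquad
|\Theta_{n'}\rangle = \hat{b}_{e^+}^\dagger(\mathbf{p}_1)\,\hat{a}_{e^-}^\dagger(\mathbf{p}_2)\,\lvert 0\rangle,
\]
and the resulting entangled basis vectors are
\[
|\Phi_{n,\pm}\rangle \;=\; \tfrac{1}{\sqrt{2}}\bigl(|\Theta_n\rangle \pm |\Theta_{n'}\rangle\bigr).
\]
Since this is a $2\times 2$ unitary inside each pair, orthonormality and completeness of the collection $\{|\Phi_{n,\pm}\rangle\}$ follow immediately, and each $|\Phi_{n,\pm}\rangle$ lies in $\mathcal{H}_Q$ because both summands do. I would then verify packaged entanglement against Definition~\ref{def:MultiPackaged}: both summands are built from packaged creation operators, superselection is respected within $\mathcal{H}_Q$, and the fact that the two summands place distinct internal-charge labels on the same particle slot prevents any factorization across excitations (tracing out a single slot produces a mixed internal-charge reduced state).

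The main obstacle will be guaranteeing a pairing in which every output is \emph{genuinely} non-factorizable, and handling edge cases. In low-dimensional sectors (e.g.\ a one-dimensional $\mathcal{H}_Q$ such as the vacuum sector, or a sector with only one admissible charge partition among a fixed particle content) no entangled basis can exist, so the statement must be read as applying to sectors of dimension $\geq 2$ that admit at least two excitations; I would add a short remark to this effect. In the generic multi-particle setting, the countable freedom in momentum labels together with the combinatorial freedom to redistribute internal charges among particles supplies an abundance of distinct product basis vectors, so the action of the symmetric group on particle slots always exposes pairs whose $\pm$ combinations are non-factorizable. Fixing such a pairing on each orbit then yields a complete orthonormal basis of $\mathcal{H}_Q$ composed entirely of packaged entangled states, which is the claim.
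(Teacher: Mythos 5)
Your construction is genuinely different from the paper's. The paper spans $\mathcal{H}_Q$ with packaged product states and then runs Gram--Schmidt, asserting that each orthogonalized vector is automatically a packaged entangled state; you instead keep the product basis intact and apply a block-diagonal unitary ($2\times 2$ Bell-type rotations on matched pairs) to convert it into an entangled basis. Your route buys two things the paper's does not: orthonormality and completeness are immediate (a unitary change of basis preserves both), and the entanglement of each output vector is actually checkable, since the two summands in $|\Phi_{n,\pm}\rangle$ assign distinct internal-charge labels to the same slot. By contrast, the paper's Gram--Schmidt argument leaves the first vector equal to a normalized product state and offers no mechanism by which orthogonalization against earlier vectors would force non-factorizability, so your approach is the more defensible of the two.

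The remaining gap in your proposal is the one you partly flag yourself: the pairing must cover \emph{every} product basis vector with a partner that redistributes the \emph{internal charges}, not merely the momenta or spins, because Definition~\ref{def:MultiPackaged} demands non-factorizability with respect to internal charges specifically. A product state such as $\hat{a}_{e^-}^\dagger(\mathbf{p}_1)\,\hat{a}_{e^-}^\dagger(\mathbf{p}_2)\,|0\rangle$ has no partner obtained by permuting IQNs across slots (every slot already carries the same charge), so the symmetric-group action you invoke produces at best momentum entanglement there, which does not qualify. To absorb such vectors you would have to enlarge the unitary blocks to mix them with product states of the same net charge but different particle content or charge distribution, and then argue that the result is still internal-charge entangled --- a step that is delicate when the superposed terms have different particle numbers. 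Your proposed restriction to sectors of dimension at least two with a nontrivial charge partition is the right instinct, but as written the proposition has no such hypothesis, so either the pairing argument must be completed for all sectors or the exceptional sectors must be excluded explicitly.
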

	
	\begin{proof}
		We now prove the proposition using Gram-Schmidt procedure:
		
		\begin{enumerate}
			\item Span \(\mathcal{H}_Q\) by Multi‐Particle Creation Operators.
			Since $\mathcal{H}_Q$ is the set of all packaged states with total charge $Q$, we can span $\mathcal{H}_Q$ using packaged states as follows,
			\begin{equation}\label{eq:SpanHQWithProductStates}
				\Bigl\{
				\hat{a}^\dagger(\mathbf{p}_1,q_1)\,\hat{a}^\dagger(\mathbf{p}_2,q_2)\,\cdots\,|0\rangle
				\;\Big\vert\;
				\sum_i q_i \;=\; Q
				\Bigr\}.
			\end{equation}
			Each such product is a “packaged” multi‐particle state in the sense of local gauge invariance.
			
			\item Constructing Orthonormal Packaged Entangle States.			
			At each step $k$, we use Gram-Schmidt procedure to construct the new packaged entangle state \(\Psi_k\) that is orthogonal to all preceding packaged entangled states \(\Psi_1, \ldots, \Psi_{k-1}\).
			This ensure that \(\Psi_k\) cannot be factorized across any bipartition of the creation operators.
			Such states exist as soon as \(\dim \mathcal{H}_Q > 1\).
			After normalization, \(\Psi_k\) has length 1.
			Thus we create an orthonormal set \(\{ \Psi_1, \Psi_2, \ldots, \Psi_n \}\).
			
			\item Completeness of Packaged Entangle States.
			From Eq.(\ref{eq:SpanHQWithProductStates}), we can always find a packaged state set $\{ \Theta_1, \Theta_2, \ldots, \Theta_n \}$ as the basis of $\mathcal{H}_Q$.
			When using Gram–Schmidt procedure, we do not change subspace $\mathcal{H}_Q$.
			Although each new vector \(\Psi_k\) is a modified version of \(\Theta_k\), it still belongs to the same subspace $\mathcal{H}_Q$.
			Consequently, \(\{ \Psi_1, \Psi_2, \ldots, \Psi_n \}\) still spans $\mathcal{H}_Q$.  
			Furthermore, the construction preserves linear independence: if the \(\Theta_k\) were linearly independent, so are the resulting orthonormal packaged entangled states \(\Psi_k\).
			Thus they form a basis of $\mathcal{H}_Q$.  
			Putting it together, \(\{ \Psi_1, \Psi_2, \ldots, \Psi_n \}\) is a complete orthonormal basis for $\mathcal{H}_Q$.			
		\end{enumerate}
		
		This shows that we can turns any linearly independent set of packaged states in $\mathcal{H}_Q$ into a complete orthonormal basis of packaged entangle states in $\mathcal{H}_Q$.
	\end{proof}
	
	This process is crucial for applications in quantum information because it guarantees the existence of a maximal orthonormal set $\{\lvert \Psi_i\rangle\}$ of packaged entangled states spanning $\mathcal{H}_Q$.  
	Consequently, Bell-like measurements on $\mathcal{H}_Q$ become possible.
	
	\begin{example}[Maximal Orthonormal Basis of Packaged Entangled States]
		\label{exm:MaximalOrthonormalBasisOfPES}
		Consider a particle-antiparticle system (e.g., an electron and a positron).  
		A simple basis for the neutral sector $Q=0$ is given by the two orthonormal states
		\begin{equation}\label{eq:PESParticleAntiparticle}
			\bigl\lvert \Psi^\pm \bigr\rangle 
			\;=\;
			\tfrac{1}{\sqrt2}\,\Bigl[
			\hat{a}^\dagger(\mathbf{p}_1)\,\hat{b}^\dagger(\mathbf{p}_2)
			\;\pm\;
			\hat{b}^\dagger(\mathbf{p}_1)\,\hat{a}^\dagger(\mathbf{p}_2)
			\Bigr]
			\lvert 0\rangle,
		\end{equation}
		where $\hat{a}^\dagger$ and $\hat{b}^\dagger$ create opposite charges $\pm e$.  
		Each $\lvert \Psi^\pm\rangle$ is a packaged entangled state, and $\{\lvert\Psi^+\rangle,\lvert\Psi^-\rangle\}$ forms an orthonormal basis for $\mathcal{H}_{Q=0}$ in this simplified two-particle model.
	\end{example}

	\section{Hybridization of Internal \& External Entanglement}
	
	While internal charges (e.g., electric charge, color) must remain ``packaged'', real particles also carry external DOFs such as spin ($s$) or momentum ($\mathbf{p}$) that are not necessarily gauged.
	If spin does not transform non-trivially under $G$, then it is an independent factor in the single-particle representation (e.g., spin-$\tfrac12$ $\otimes$ charge $-e$).
	Hence a single-particle operator $\hat{a}^\dagger_{s,q}(\mathbf{p})$ can carry both spin $s$ (external) and gauge charge $q$ (internal).  
	Similarly, momentum ($\mathbf{p}$) or other quantum numbers can appear.
	Here we show that multiple excitations can form states that entangle spin and IQNs across different particles without violating gauge invariance.

	\paragraph{(1) Definition of Hybridized Packaged Entangled States.}
	We now give the formally definition of hybridized packaged entangled states:
	
	\begin{definition}[Hybridized Packaged Entangled State]
		\label{def:HybridPackaged}
		Consider a multi-particle state
		\[
		|\Psi\rangle \;=\; \sum_{n}\alpha_n\;
		\Bigl(
		\hat{a}_{n,1}^\dagger(s_1,q_1)\,
		\hat{a}_{n,2}^\dagger(s_2,q_2)\cdots
		\hat{a}_{n,m}^\dagger(s_m,q_m)
		\Bigr)\,|0\rangle,
		\]
		where $n, m$ label distinct particles, $s$ is the spin index (or helicity) of a particle, and $q$ is the particle’s internal gauge charge (e.g., $\pm e$, color $\mathbf{3}$ or $\overline{\mathbf{3}}$).  
		If $|\Psi\rangle$ satisfies the following conditions:
		\begin{enumerate}
			\item Each single-particle operator $\hat{a}^\dagger(s,q)$ is a packaged irrep carrying the full internal charge $q$ and external spin $s$;
			
			\item All terms 
			$\hat{a}_{n,1}^\dagger(s_1,q_1)\,
			\hat{a}_{n,2}^\dagger(s_2,q_2)\cdots
			\hat{a}_{n,m}^\dagger(s_m,q_m)
			\,|0\rangle$
			are multi-particle basis states that lie in the same net-charge sector (e.g., net charge $Q$), so superselection is respected;
			
			\item The total wavefunction is non-factorizable across various excitations with respect to internal charges and external DOFs (spin, momentum) (i.e., it is entangled both internally and externally);
		\end{enumerate}
		then we say that $|\Psi\rangle$ is a \textbf{hybridized packaged entangled state}.
	\end{definition}
	
	\begin{example}[Hybrid Spin-Charge Entangled Pair]
		\label{ex:HybridExample}
		As a concrete illustration, consider an electron-positron pair, where each particle can be spin-up $\uparrow$ or spin-down $\downarrow$.  
		A simple hybridized packaged entangled state is:				
		\[
		\alpha \,\hat{a}_{e^-,\uparrow}^\dagger(\mathbf{p}_1)\,\hat{b}_{e^+,\downarrow}^\dagger(\mathbf{p}_2)\,|0\rangle
		\;+\;
		\beta \,\hat{b}_{e^+,\downarrow}^\dagger(\mathbf{p}_1)\,\hat{a}_{e^-,\uparrow}^\dagger(\mathbf{p}_2)\,|0\rangle
		\]
		with both terms lying in the net $Q=0$ sector for $\alpha,\beta\neq 0$. 
		Each creation operator $\hat{a}_{e^-,\uparrow}^\dagger$ or $\hat{b}_{e^+,\downarrow}^\dagger$ is a packaged operator carrying (charge $-e$, spin $\uparrow$) or (charge $+e$, spin $\downarrow$).  
		The entanglement is hybrid because measuring spin on one particle projects the entire spin-charge wavefunction for both particles.
	\end{example}

	\paragraph{(2) Packaging of Hybrid Internal-External DOFs.}
	We conclude with a statement on how gauge invariance is preserved in hybrid states, yet spin or momentum measurements can collapse internal DOFs if the state is hybridized entangled:
	
	\begin{theorem}[Packaging of Hybrid Internal-External DOFs]
		\label{thm:HybridGauge}
		Consider internal IQNs (electric charge or color) and external DOFs (spin or momentum) appended to each creation operator.
		Then:
		\begin{enumerate}
			\item The total state is physically realizable with no cross-sector interference.  
			Mixing $\mathcal{H}_Q$ with $\mathcal{H}_{Q'}$ (for $Q\neq Q'$) is disallowed.
			
			\item The total state remains gauge invariant (or, more precisely, transforms covariantly within the same gauge sector) as long as the net gauge charge $Q$ of each term is fixed within the superposition.
			
			\item If the total wavefunction is hybridized entangled across internal IQNs and external DOFs, then a projective measurement on the external DOFs can collapse the entire external-internal wavefunction, i.e., it forces both the external DOFs and the internal IQNs into a definite configuration while preserving the net charge.
		\end{enumerate}
	\end{theorem}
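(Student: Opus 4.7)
The plan is to exploit the fact that external DOFs (spin, momentum) commute with the gauge group action, which lets me reduce parts (1) and (2) to the setting already handled by Theorem~\ref{thm:SuperselectionPackaging}, and then handle part (3) separately via the standard projection postulate applied to an external--internal decomposition.

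First I would address parts (1) and (2). Under a gauge transformation $U_g$, each hybrid operator $\hat{a}^\dagger(s,q)$ transforms only through its internal charge index $q$; the external label $s$ is a passive spectator. Hence, for any basis term
\[
\hat{a}_{n,1}^\dagger(s_1,q_1)\cdots\hat{a}_{n,m}^\dagger(s_m,q_m)\,|0\rangle,
\]
Lemma~\ref{lemma:GaugeInvMultpParticle} applies essentially verbatim, giving a definite gauge representation with net charge $Q=\sum_k q_k$. Since every term in $|\Psi\rangle$ sits in the same $\mathcal{H}_Q$ by hypothesis, the Wick--Wightman--Wigner argument forbids coherent interference with any $\mathcal{H}_{Q'}$, $Q'\neq Q$, and gauge covariance within $\mathcal{H}_Q$ follows by linearity, exactly as in Theorem~\ref{thm:SuperselectionPackaging}.

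For part (3) I would rewrite $|\Psi\rangle$ in a Schmidt-like form with respect to the external and internal DOFs,
\[
|\Psi\rangle \;=\; \sum_\mu c_\mu\,|E_\mu\rangle\otimes|I_\mu\rangle,
\]
where $\{|E_\mu\rangle\}$ is an orthonormal family of external-DOF states and each $|I_\mu\rangle$ is the correlated multi-particle packaged configuration in $\mathcal{H}_Q$. Because $|\Psi\rangle$ is hybrid entangled, the correspondence $|E_\mu\rangle\mapsto|I_\mu\rangle$ is nontrivial: at least two distinct external outcomes correlate with genuinely different internal packaged configurations. Applying the projector $\Pi_{E_\mu}=|E_\mu\rangle\langle E_\mu|\otimes\mathbf{1}_{\mathrm{int}}$, which is the standard projection postulate for a measurement of the external DOFs, collapses $|\Psi\rangle$ onto the single branch $|E_\mu\rangle\otimes|I_\mu\rangle$, thereby simultaneously fixing the internal configuration. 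Since each $|I_\mu\rangle\in\mathcal{H}_Q$, the post-measurement state still respects superselection, establishing the ``preserving the net charge'' clause.

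The main obstacle I anticipate is formalizing this tensor-product decomposition cleanly in the Fock-space setting, where external and internal labels live inside the same creation operators rather than on manifestly separate factors, and where identical-particle statistics further complicate a naive factorization. I would address this by reindexing each basis term via its external ``string'' $(s_1,\ldots,s_m)$ and its internal ``string'' $(q_1,\ldots,q_m)$, then grouping terms with a common external string to define the effective bipartite structure needed for the projection step. A secondary subtlety is ensuring that the external measurement operator is itself gauge invariant, so that the collapse does not covertly violate gauge covariance; this holds automatically because spin and momentum observables commute with $U_g$.
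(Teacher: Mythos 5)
Your proposal follows essentially the same route as the paper's proof: parts (1) and (2) rest on the decoupling of external DOFs from the gauge action together with Lemma~\ref{lemma:GaugeInvMultpParticle} and the superselection argument of Theorem~\ref{thm:SuperselectionPackaging}, and part (3) rests on the projection postulate applied to an external measurement that drags the correlated internal configuration along with it. Your version is in fact somewhat more careful than the paper's, which simply asserts the collapse in part (3); your explicit Schmidt-like decomposition, the reindexing by external and internal strings, and the observation that the external projector commutes with $U_g$ address genuine gaps that the paper's three-line argument leaves implicit.
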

	
	\begin{proof}
		We proceed in three steps:
		
		\begin{enumerate}
			\item Since the net gauge charge is unchanged across all superposition terms, superselection prevents mixing between different charge sectors.
			
			\item A local gauge transformation does not affect spin (or momentum) because spin is associated with an external $\mathrm{SU}(2)$ symmetry that is decoupled from the gauge group (e.g., $U(1)$ or $SU(3)$).  
			Thus, as long as the net charge $Q$ remains the same in every term, the overall wavefunction transforms within the same gauge sector.
			
			\item Although a measurement on spin (or momentum) is not a gauge transformation, it is a valid physical observable.  
			Since each creation operator carries both spin and gauge charge, measuring the spin projects the entire wavefunction.  
			Even though superselection guarantees that the total gauge charge remains fixed, the spin measurement collapses any existing spin-gauge entanglement and thereby forces the internal charges into a definite configuration within that net-charge sector.
		\end{enumerate}
		
		Thus, the entire ``external $\otimes$ internal'' wavefunction remains gauge invariant (i.e., confined to a single gauge sector), and a measurement of an external DOF triggers wavefunction collapse across the internal IQNs as well \cite{NielsenChuang}.
	\end{proof}
	
	In Example~\ref{ex:HybridExample}, measuring spin-up versus spin-down on the electron forces the positron’s spin to collapse, thereby collapsing the entire correlated spin-charge state.
	The net $Q=0$ sector is preserved, but the post-measurement outcome can break the entanglement structure.
	This is the typical phenomenon observed in bipartite entanglement, now embedded within a gauge invariant context.
	
	Recent work on lattice gauge theories \cite{Zohar2016,Sala2018} and on quantum resource allocation in gauge-invariant systems has highlighted how gauge constraints restrict accessible states in a fundamental way.  
	Our packaging framework clarifies a central point: partial or fractional IQNs cannot appear as independent quantum DOFs in a single-particle state.  
	It further shows how gauge singlets (e.g., color-singlet quark-antiquark states) can produce packaged entangled pairs within one net-charge sector.  
	Such states exemplify packaged entanglement: the wavefunction is packaged entangled while still respecting superselection and local gauge invariance.

	\section{Packaging Principle}
		
	We now consolidate Theorems~\ref{thm:NoPartialFactorization}, \ref{thm:SuperselectionPackaging}, and~\ref{thm:HybridGauge} into one unifying statement:
	
	\begin{definition}[Packaging Principle]
		\label{def:PackagingPrinciple}
		In a local gauge theory subject to superselection rules, all \textbf{internal quantum numbers (IQNs)} (electric charge, flavor, or color) are packaged into irreducible representation (irrep) blocks under the gauge group.
		Consequently:
		\begin{enumerate}
			\item \textbf{No partial factorization of IQNs:}  
			Single-particle creation operators must transform as complete irreps of the gauge group.  
			One cannot split or partially factor out the IQNs from a single-particle excitation (Theorem~\ref{thm:NoPartialFactorization}).
			
			\item \textbf{Single net-charge superselection sector:}  
			Physical states cannot coherently superpose different net charges.  
			All multi-particle superpositions must lie within a single net-charge sector (Theorem~\ref{thm:SuperselectionPackaging}).
			
			\item \textbf{Packaged entanglement:}  
			Within a fixed charge sector, multi-particle excitations can form entangled superpositions.  
			We call this ``packaged entanglement'' because it spans all the IQNs (Theorem~\ref{thm:SuperselectionPackaging}).
			One cannot entangle part of the IQNs while leaving others separable.
			
			\item \textbf{Hybrid internal-external entanglement:}  
			Since external degrees of freedom (e.g., spin or momentum) do not transform under the local gauge group, they can be non-trivially combined with the IQNs, permitting \emph{hybrid} packaged entangled states.  
			A measurement of external DOFs can collapse the overall wavefunction, including its internal charges, while preserving the original net-charge sector (Theorem~\ref{thm:HybridGauge}).
		\end{enumerate}
	\end{definition}
	
	In short, this packaging principle unifies four features of packaged entanglement we have analyzed.  
	It shows that all IQNs are ``locked together'' in irreps, while external DOFs remain free to entangle, be measured, and collapse the entire wavefunction within a single charge sector.

	\section{Discussion}
	
	We have shown that packaged entangled states are natural consequence of local gauge invariance and superselection rules.
	This bridges standard field-theoretic constraints with modern entanglement measures.
	Packaged entangled states may be used as robust quantum information carriers and in quantum error-correction protocols that inherently respect gauge symmetry.
	The conventional approaches usually treat entanglement independently of gauge considerations.
	Here we highlight that the very structure of a particle’s creation operator and thus the inseparability of its internal quantum numbers can serve as a powerful resource for quantum information processing.	
	We anticipate these results will be useful for:
	(1) \emph{Quantum simulations of lattice gauge theories} \cite{Zohar2016,Sala2018}:  
		They clarify that partial local charges or partial color cannot exist, yet entangled color singlets are allowed.	
	(2) \emph{High-energy physics}:  
		Color confinement \cite{Gross1973,Politzer1973}, black-hole pair production \cite{Hawking1975}, and hadronization can be viewed through the quantum-information lens of packaging.  
		Confinement, for instance, can be seen as the impossibility of producing a free single-particle excitation without forming a proper irrep carrying net color zero.
	(3) \emph{Quantum error-correction} \cite{NielsenChuang}:  
		Building gauge-invariant error-correcting codes requires that each logical excitation be consistently packaged.  
		This helps ensure that no unphysical states appear in the code space.

	\section{Conclusion}
	
	We have shown how local gauge invariance and superselection rules jointly yield the ``packaging principle'' for quantum field excitations:
	\emph{(1) No partial IQN factorization:}  
	A single-particle creation operator is an irreducible block under gauge transformations, forbidding partial factorization of electric charge, color, or baryon number.
	\emph{(2) Single net-charge superselection sector:}  
	Multi-particle states must lie entirely within one net charge (or color) sector, prohibiting cross-sector superpositions.
	\emph{(3) Packaged Entangled States:}  
	Within a single charge sector, non-factorizable superpositions remain possible, yielding packaged entangled excitations (e.g., mesons, electron-positron pairs, flavor-entangled $K^0 \bar{K}^0$, etc.).
	\emph{(4) Hybrid External-Internal DOFs:}  
	Spin or momentum can be combined with internal charges in each single-particle operator, leading to hybridized entanglement that remains gauge invariant (or transforms covariantly within a fixed charge sector) as long as the net charge is fixed.  
	Measuring an external DOF (e.g., spin) can collapse the entire external-internal wavefunction while preserving the superselection sector.

\end{document}